\newtheorem{theorem}{Theorem}
\begin{document}

\title{\textbf{A General Approach to Seismic Inversion with Automatic Differentiation}}

\author[1*]{Weiqiang Zhu}
\author[2*]{Kailai Xu}
\author[2,3]{Eric Darve}
\author[1]{Gregory C. Beroza}
\affil[1]{\small Department of Geophysics, Stanford University, Stanford, CA, 94305}
\affil[2]{\small Institute for Computational and Mathematical Engineering, Stanford University, Stanford, CA, 94305}
\affil[3]{\small Mechanical Engineering, Stanford University, Stanford, CA, 94305}
\affil[ ]{\texttt {\{zhuwq, kailaix, darve, beroza\}@stanford.edu}}
\date{}
\renewcommand{\thefootnote}{\fnsymbol{footnote}}
\footnotetext[1]{The two authors contributed  equally to this paper.}

\maketitle

\section*{Abstract}
Imaging Earth structure or seismic sources from seismic data involves minimizing a target misfit function, and is commonly solved through gradient-based optimization. The adjoint-state method has been developed to compute the gradient efficiently; however, its implementation can be time-consuming and difficult. We develop a general seismic inversion framework to calculate gradients using reverse-mode automatic differentiation. The central idea is that adjoint-state methods and reverse-mode automatic differentiation are mathematically equivalent. The mapping between numerical PDE simulation and deep learning allows us to build a seismic inverse modeling library, ADSeismic, based on deep learning frameworks, which supports high performance reverse-mode automatic differentiation on CPUs and GPUs. We demonstrate the performance of ADSeismic on inverse problems related to velocity model estimation, rupture imaging, earthquake location, and source time function retrieval. ADSeismic has the potential to solve a wide variety of inverse modeling applications within a unified framework.  

\section*{Introduction}

Inverse modeling is used in seismology to recover physical parameters such as earthquake location, magnitude, and Earth's interior structure. Such inverse problems are usually solved by minimizing a misfit function that measures the discrepancy between predictions and observations. Gradient-based optimization requires calculation of the gradient of the misfit function with respect to the physical parameters. The adjoint-state method \citep{plessix2006review} is a commonly used technique for computing the gradient efficiently. This method solves an adjoint linear system, which involves solutions of the forward problem. The drawback of the adjoint-state method is that the derivation and implementation can be very challenging, and must be done on a case-by-case basis for different systems.
To our knowledge, although many frameworks exist for specific inverse modeling applications  \citep{rucker2017pygimli, cockett2015simpeg}, general frameworks that can estimate physical parameters without case-by-case gradient derivation and implementation are lacking. 

Automatic differentiation (AD) \citep{paszke2017automatic, baydin2017automatic}, where the gradients are computed automatically based on the computational graph of the forward simulation, provides an alternative approach. In AD, a computational graph of the forward simulation keeps track of arithmetical operation dependencies, stores intermediate results, and computes the gradient using the chain rule. AD has been the dominant approach for training deep neural networks, which is known as ``backpropagation'' in the deep learning community. Both deep neural networks and PDE simulations can be viewed as a series of linear or nonlinear operators \citep{hughes2019wave}. Moreover, reverse-mode automatic differentiation has been shown to be equivalent to the adjoint-state method mathematically \citep{li2019time}. This correspondence allows us to develop a flexible and general seismic inversion framework, ADSeismic, based on current deep learning frameworks such as TensorFlow \citep{abadi2016tensorflow} and PyTorch \citep{paszke2019pytorch}.  ADseismic provides a high performance environment with easily accessible gradients on CPUs, GPUs, and TPUs~\citep{jouppi2017datacenter}. 

We note that AD has already been applied to velocity estimationg in exploration seismology~\citep{sambridge2007automatic, cao2015computational, vlasenko2016efficiency,  richardson2018seismic}. In contrast to existing open-sourced seismic inversion software; however, ADSeismic is built on a deep learning framework, which allows for flexibly experimenting with new models, leverages specialized hardware designed for deep learning, and executes numerical simulations on heterogeneous computing platforms.

 We demonstrate several applications, including: velocity estimation, fault rupture imaging, earthquake location, and source time function retrieval. AD yields the same results as adjoint-state methods. The advantage is that while we need to derive and implement a specific gradient in each case with adjoint-state methods, different inversion problems can be solved with little or no change in the forward simulation codes with ADSeismic. Moreover, we achieve more than 20 times and 60 times acceleration for acoustic and elastic wave equations respectively when switching to GPU devices compared to CPUs. Since deep learning hardware and frameworks are improving continuously, ADSeismic provides seismic inverse modeling with increasingly  powerful automatic differentiation techniques for a wide range of applications.

\section*{Method}

\subsection*{Automatic Differentiation}
Automatic differentiation (AD) is a general and efficient method to compute gradients based on the chain rule. By tracing the forward-pass computation, the gradient at the final step propagates back to each operator and parameter in a computational graph. AD is mainly used for training neural network models that consist of a sequence of linear transforms and non-linear activation functions. AD calculates the gradients of every variable by propagating the gradients back from the loss function to the trainable parameters. These gradients are then used in a gradient-based optimizer, such as the gradient descent (GD) method to update the parameters and minimize the differences between the model predictions and the ground-truth labels. Numerical simulations based on PDEs are similar to neural network models in that they are both sequences of linear/non-linear transformations (Fig. \ref{fig:compare-NN-PDE}). For example, the Finite-Difference Time-Domain (FDTD) method~\citep{yee1966numerical}, applies a finite difference operator to consecutive time steps to solve time-dependent PDEs~\citep{hughes2019wave}. In seismic problems, we specify parameters, such as wave velocity, source location, or source time functions, in forward simulations to generate predicted seismic signals. 
In ADSeismic, the gradients of the observational differences over these parameters can be computed automatically and thus used in a gradient-based optimizer in the same way as when training neural networks.  

\begin{figure}[htpb]
    \centering
    \includegraphics[width=0.8\linewidth]{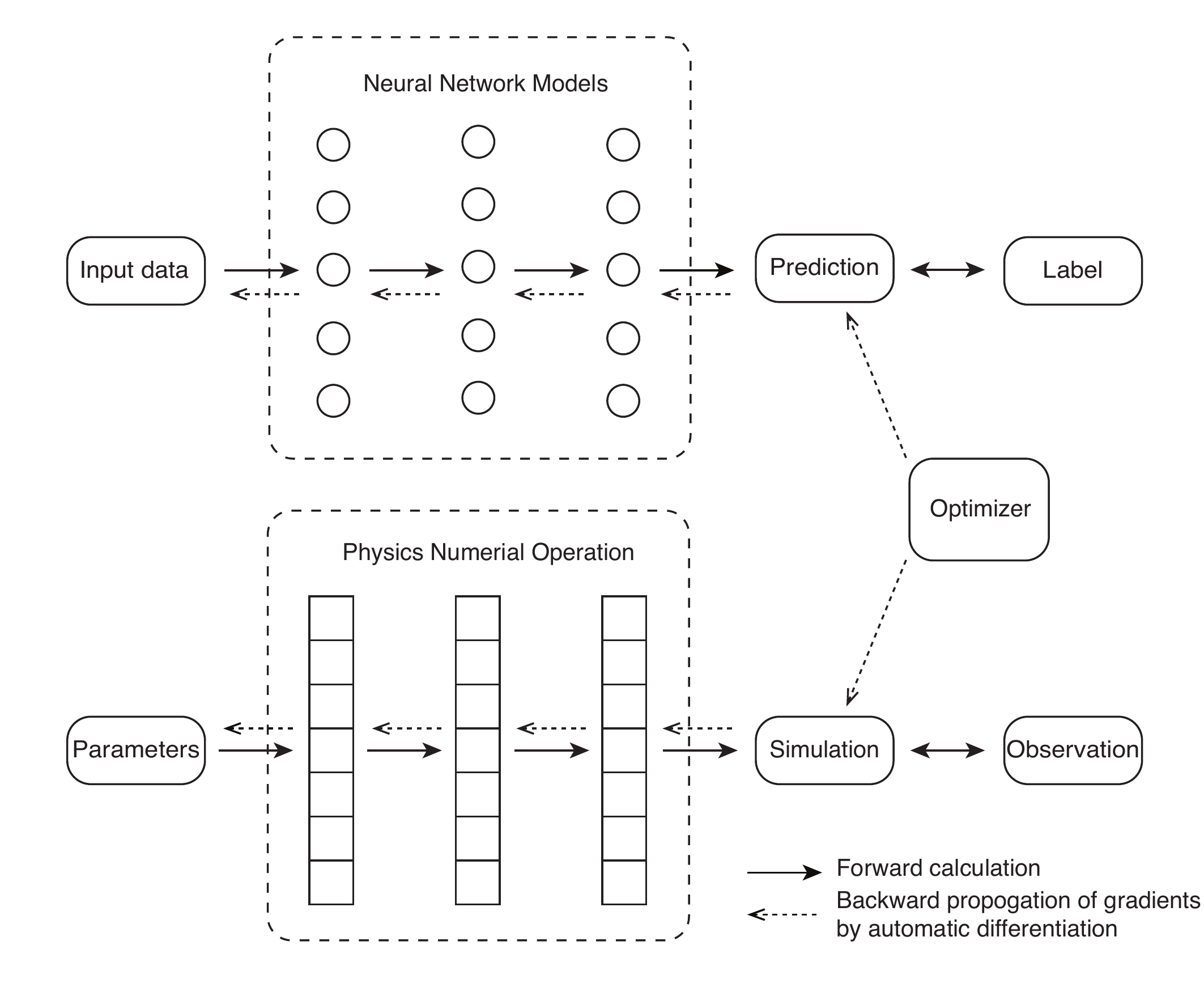}
    \caption{The similarity between neural networks and PDE-based physical simulation}
    \label{fig:compare-NN-PDE}
\end{figure}

\subsection*{Relationship to the Adjoint Method}

The adjoint-state method is an efficient technique for computing the gradient of the misfit function with respect to the physical parameters of interest. For example, the adjoint-state method is commonly used to compute the gradient in full-waveform inversion \citep{plessix2006review}. To clarify the connection between the adjoint-state method and reverse-mode automatic differentiation, we provide a derivation based on the Lagrange multipliers.

Consider the explicit discretization of the wave equation, which can be written as 
\begin{equation}
\begin{aligned}
    U_1 &= A(\theta) U_0 + F_0 \\
    U_2 &= A(\theta) U_1 + F_1\\
    &\vdots \\
    U_{n-1} &= A(\theta) U_{n-2} + F_{n-2}\\
    U_{n} &= A(\theta) U_{n-1} + F_{n-1}\\
\end{aligned}
\label{equ:forward}
\end{equation}
where $U_{k}$ is the seismic wavefield at $k$-th time step, $F_{k}$ is the source term at the $k$-th time step, $A(\theta)$ is the associated coefficient matrix, and $\theta$ is the physical parameter of interest, e.g., the wave velocity. $A(\theta)$ indicates that the entries in the matrix depend on $\theta$. 
To simplify the notation, we let the misfit function be
$$J(\theta) = \frac{1}{2} \sum_{k=1}^n \|U_k(\theta)-U_k^{\mathrm{obs}}\|^2$$
where $U_k^{\mathrm{obs}}$ is the observation at $k$-th step. 
The corresponding Lagrangian functional is 
\begin{equation}\label{equ:Lag}
    L(\theta, U_{1}, \ldots, U_{n}) = \frac{1}{2} \sum_{k=1}^{n} ||U_k-U_k^{\mathrm{obs}}||^2 + \sum_{k=1}^{n} {\lambda_k}^T \left( A(\theta) U_{k-1} + F_{k-1} - U_k \right)
\end{equation}
where $\lambda_k$ is the adjoint variable. 
The Karush–Kuhn–Tucker (KKT) condition \citep{luenberger1984linear} for Eq.~\eqref{equ:Lag} reads
\begin{equation}\label{equ:Adj1}
\begin{aligned}
    \frac{\partial L}{\partial U_{n}} &= U_{n}  - U_{n}^{\mathrm{obs}} - \lambda_{n} = 0 \\
    \frac{\partial L}{\partial U_{n-1}} &= U_{n-1} - U_{n-1}^{\mathrm{obs}}- \lambda_{n-1} + A(\theta)^T \lambda_{n} = 0 \\
    &\vdots \\
    \frac{\partial L}{\partial U_{2}} &= U_{2} - U_{2}^{\mathrm{obs}}- \lambda_{2} + A(\theta)^T \lambda_{3} = 0 \\
    \frac{\partial L}{\partial U_{1}} &= U_{1} - U_{1}^{\mathrm{obs}}- \lambda_{1} + A(\theta)^T \lambda_{2} = 0 \\
\end{aligned}
\end{equation}

Rearranging \eqref{equ:Adj1} we obtain
\begin{equation}
\begin{aligned}
    \lambda_{n} &= U_{n}- U_{n}^{\mathrm{obs}}\\
    \lambda_{n-1} &= A(\theta)^T\lambda_{n} + U_{n-1} - U_{n-1}^{\mathrm{obs}} \\
    &\vdots \\
    \lambda_{2}  &= A(\theta)^T \lambda_{3} + U_{2} - U_{2}^{\mathrm{obs}} \\
    \lambda_{1} &= A(\theta)^T \lambda_{2} + U_{1} - U_{1}^{\mathrm{obs}} \\
\end{aligned}
\label{eqn:adjoint-backward}
\end{equation}
Note that we can compute all the adjoint variables $\lambda_{k}, k=1,2,\ldots, n$ sequentially from $k=n$ to $k=1$. In this process, we need to perform matrix multiplication with the coefficient matrix $A(\theta)^T$, which why we call $\lambda_{k}$ adjoint variables. 

Finally, the gradients of $L$ with respect to $\theta$ can be extracted using the computed $\lambda_{k}, k=1,2,\ldots, n$
\begin{equation}
    \boxed{\frac{\partial L}{\partial \theta} = \sum_{k=1}^{n} {\lambda_k}^T \frac{\partial A(\theta)}{\partial \theta} U_{k-1} }
    \label{eqn:adjoint-gradient}
\end{equation}

In the following text, we describe how reverse-mode AD is used for computing the gradient $\frac{\partial J}{\partial \theta}$ and show that AD calculates the adjoint variables and gradients in the same way as the adjoint-state method (Eq. \eqref{eqn:adjoint-backward} and \eqref{eqn:adjoint-gradient}). A straightforward way to view AD is to consider a specific operator in the computational graph from $k-1$ to $k$ step: 
\begin{equation}
\begin{aligned}
    \text{Forward Computation:  }\;& U_k(U_{k-1}, \theta) = A(\theta) U_{k-1}+F_{k-1}\\
    \text{Backward Gradient:  }\;&
    \frac{\partial U_k(U_{k-1}, \theta)}{\partial U_{k-1}} = A(\theta)^T \\
    \;& \frac{\partial U_k(U_{k-1}, \theta)}{\partial \theta} = \frac{\partial A(\theta)}{\partial \theta} U_{k-1} \\
\end{aligned}
\end{equation}

We assume that the gradient of $J$ with respect to $U_{k}$ has already been calculated at the $k$-th time step. We then back-propagate the gradients to the previous time step (Fig.~\ref{fig:auto-diff}). For convenience we define
\begin{equation}\label{equ:muk}
\begin{aligned}
\mu_{k} &:= \left(\frac{\partial J(\theta, U_1, \ldots, U_k)}{\partial U_{k}} \right)^T& k = 1,2, \ldots, n-1\\ 
\mu_n &= \left(U_{n}-U_{n}^{\mathrm{obs}}\right)^T
\end{aligned}
\end{equation}
Here, $J(\theta, U_1, \ldots, U_k)$ can be recursively defined as 
\begin{align*}
J(\theta, U_1, \ldots, U_n) &:= \frac{1}{2}\sum_{k=1}^n \|U_k - U_k^{\mathrm{obs}}\|^2 \\
J(\theta, U_1, \ldots, U_k) &:= J(\theta, U_1, \ldots, U_{k}, A(\theta) U_{k}+F_{k}) & k = 1,2, \ldots, n-1
\end{align*}
where we define $J(\theta, U_1, \ldots, U_k)$ by substituting $U_{k+1}$ in $J(\theta, U_1, \ldots, U_k, U_{k+1})$ with $A(\theta) U_{k}+F_{k}$.

We now focus on one specific step shown in bold in Fig.~\ref{fig:auto-diff}. In AD, we need to compute the gradients $\frac{\partial J}{\partial U_{k-1}}$ and $\frac{\partial J}{\partial \theta}$ given the so-called ``top'' gradients $\frac{\partial J}{\partial U_{k}}$ (noted by the symbol ``b'' in Fig.~\ref{fig:auto-diff}).
The gradient backpropagation rule for $\frac{\partial J}{\partial U_{k-1}}$ reads
\begin{equation}
\begin{aligned}
\mu_{k-1}^T = \frac{\partial J(\theta, U_{1}, \ldots, U_{k-1})}{\partial U_{k-1}} &= \underbrace{\overbrace{\frac{\partial J(\theta, U_1,\ldots, U_{k})}{\partial U_{k}}}^{(b)}\frac{\partial U_k(U_{k-1}, \theta)}{\partial U_{k-1}}}_{(a)} + \overbrace{\frac{\partial J(\theta,U_1, \ldots, U_{n})}{\partial U_{k-1}}}^{(c)}\\
&=  {A(\theta)^T} \mu_k + \left(U_{k-1} -  U_{k-1}^{\mathrm{obs}}\right) \quad k = 2,\ldots, n\\
\end{aligned}
\label{eqn:ad-backward}
\end{equation}
Note $J$ on the left hand side and on the right hand side have different arguments. See 
Fig.~\ref{fig:auto-diff} for illustration. 

The gradient back-propagation rule for $\frac{\partial J(\theta, U_1, U_{2},\ldots, U_{k-1})}{\partial \theta}$ reads
\begin{equation*}
\begin{aligned}
g_{k-1} := \overbrace{\frac{\partial J(\theta, U_1, \ldots, U_{k-1})}{\partial \theta}}^{(d)} = \overbrace{\frac{\partial J(\theta, U_1, \ldots, U_{k})}{\partial U_{k}}}^{(b)} \frac{\partial U_{k}(U_{k-1},\theta)}{\partial \theta}
= \mu_k^T \frac{\partial A(\theta)}{\partial \theta} U_{k-1}
\end{aligned}
\end{equation*}

The gradient $\frac{\partial J(\theta)}{\partial \theta}$ is computed by accumulating $g_k$ from all steps
\begin{equation}\label{equ:ad-gradient}
   \boxed{ \frac{\partial J(\theta)}{\partial \theta} = \sum_{k=1}^n g_k = \sum_{k=1}^n \mu_k^T\frac{\partial A(\theta)}{\partial \theta}U_{k-1} }
\end{equation}

We now demonstrate the equivalence of the gradients (Eq.~\eqref{eqn:adjoint-gradient})  computed using AD and the gradients (Eq.~\eqref{equ:ad-gradient}) computed using the adjoint-state method.
\begin{theorem}\label{theorem:main}
    Assume that $\{\mu_k\}_{k=1}^n$ satisfies Eq.~\eqref{equ:muk} and Eq.~\eqref{eqn:ad-backward}, and $\{\lambda_k\}_{k=1}^n$ satisfies Eq.~\eqref{equ:Adj1}, then 
    \begin{equation}
        \lambda_k = \mu_k \quad k = 1,2, \ldots, n
    \end{equation}
    And therefore, 
    \begin{equation}\label{equ:thm}
        \sum_{k=1}^n \mu_k^T\frac{\partial A(\theta)}{\partial \theta}U_{k-1} = \sum_{k=1}^{n} {\lambda_k}^T \frac{\partial A(\theta)}{\partial \theta} U_{k-1}
    \end{equation}
\end{theorem}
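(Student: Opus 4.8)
The plan is to prove the pointwise identity $\lambda_k = \mu_k$ by backward induction on $k$, running from $k=n$ down to $k=1$, and then to obtain Eq.~\eqref{equ:thm} by a single substitution. The key observation is that once the two families are written in the same recursive form, they satisfy \emph{identical} backward recurrences with \emph{identical} terminal data; the entire content of the theorem is that two sequences generated by the same recursion from the same seed must coincide. I would first rearrange the KKT conditions \eqref{equ:Adj1} into the explicit form \eqref{eqn:adjoint-backward}, so that $\lambda_k$ is expressed as a function of $\lambda_{k+1}$, and similarly read the AD rule \eqref{eqn:ad-backward} as a forward-in-structure but backward-in-time recurrence for $\mu_k$.

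First I would establish the base case. The terminal line of \eqref{eqn:adjoint-backward} gives $\lambda_n = U_n - U_n^{\mathrm{obs}}$, while the second line of \eqref{equ:muk} gives $\mu_n = U_n - U_n^{\mathrm{obs}}$ (modulo the transpose convention discussed below); hence $\lambda_n = \mu_n$. For the inductive step I would assume $\lambda_{k+1} = \mu_{k+1}$ for some $k \in \{1,\ldots,n-1\}$ and show $\lambda_k = \mu_k$. The adjoint recursion reads $\lambda_k = A(\theta)^T \lambda_{k+1} + (U_k - U_k^{\mathrm{obs}})$, while \eqref{eqn:ad-backward}, after the index shift $k \mapsto k+1$ over its stated range $k=2,\ldots,n$, reads $\mu_k = A(\theta)^T \mu_{k+1} + (U_k - U_k^{\mathrm{obs}})$ for $k=1,\ldots,n-1$. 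Substituting the inductive hypothesis into the first relation and comparing term by term against the second yields $\lambda_k = \mu_k$, closing the induction.

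The main obstacle here is bookkeeping rather than analysis: reconciling the transpose conventions between the two derivations. In \eqref{eqn:adjoint-backward} the $\lambda_k$ appear naturally as column vectors, whereas $\mu_n$ in \eqref{equ:muk} and $\mu_{k-1}^T$ in \eqref{eqn:ad-backward} are written in row form as transposed gradients. I would therefore fix a single orientation at the outset, say treating every quantity as a column vector by transposing each gradient-as-row expression, and verify that the two recurrences genuinely align (in particular that the $A(\theta)^T$ factor and the residual term $U_k - U_k^{\mathrm{obs}}$ match) before invoking the induction. This is the only place where care is needed; the rest is mechanical.

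Finally, once $\lambda_k = \mu_k$ holds for all $k$, Eq.~\eqref{equ:thm} follows immediately. Substituting $\lambda_k$ for $\mu_k$ in the summand $\mu_k^T \frac{\partial A(\theta)}{\partial \theta} U_{k-1}$ transforms the AD gradient \eqref{equ:ad-gradient} into the adjoint-state gradient \eqref{eqn:adjoint-gradient} term by term, so the two gradient expressions are equal with no additional computation. I expect the whole argument to fit in a few lines, since it is at heart a uniqueness statement for a linear backward recursion.
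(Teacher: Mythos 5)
Your proposal is correct and matches the paper's own argument: the paper likewise observes that $\lambda_n = \mu_n = U_n - U_n^{\mathrm{obs}}$ and that the recursions \eqref{eqn:adjoint-backward} and \eqref{eqn:ad-backward} coincide, concluding $\lambda_k = \mu_k$ for all $k$ and hence Eq.~\eqref{equ:thm}. You simply make explicit the backward induction and the transpose bookkeeping that the paper leaves implicit.
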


\begin{proof}
Note $\lambda_{n} = \mu_{n} = U_{n} - U_{n}^{\mathrm{obs}}$ and the recursive relations Eq.~\eqref{eqn:adjoint-backward} and Eq.~\eqref{eqn:ad-backward} are the same. Thus, we have $\lambda_k = \mu_k, \quad k = 1,2, \ldots, n$.  Therefore, Eq.~\eqref{equ:thm} holds. 
\end{proof}
Theorem.~\ref{theorem:main} implies that reverse-mode automatic differentiation is mathematically equivalent to the adjoint-state method, and the intermediate gradient $\mu_k=\frac{\partial J}{\partial U_{k}}$ is exactly the adjoint variable $\lambda_k$. In the following text, we describe our general approach for seismic inversion based on the connection between the automatic differentiation and the adjoint state method.

\begin{figure*}[!ht]
	\centering
	\includegraphics[width=0.8\textwidth]{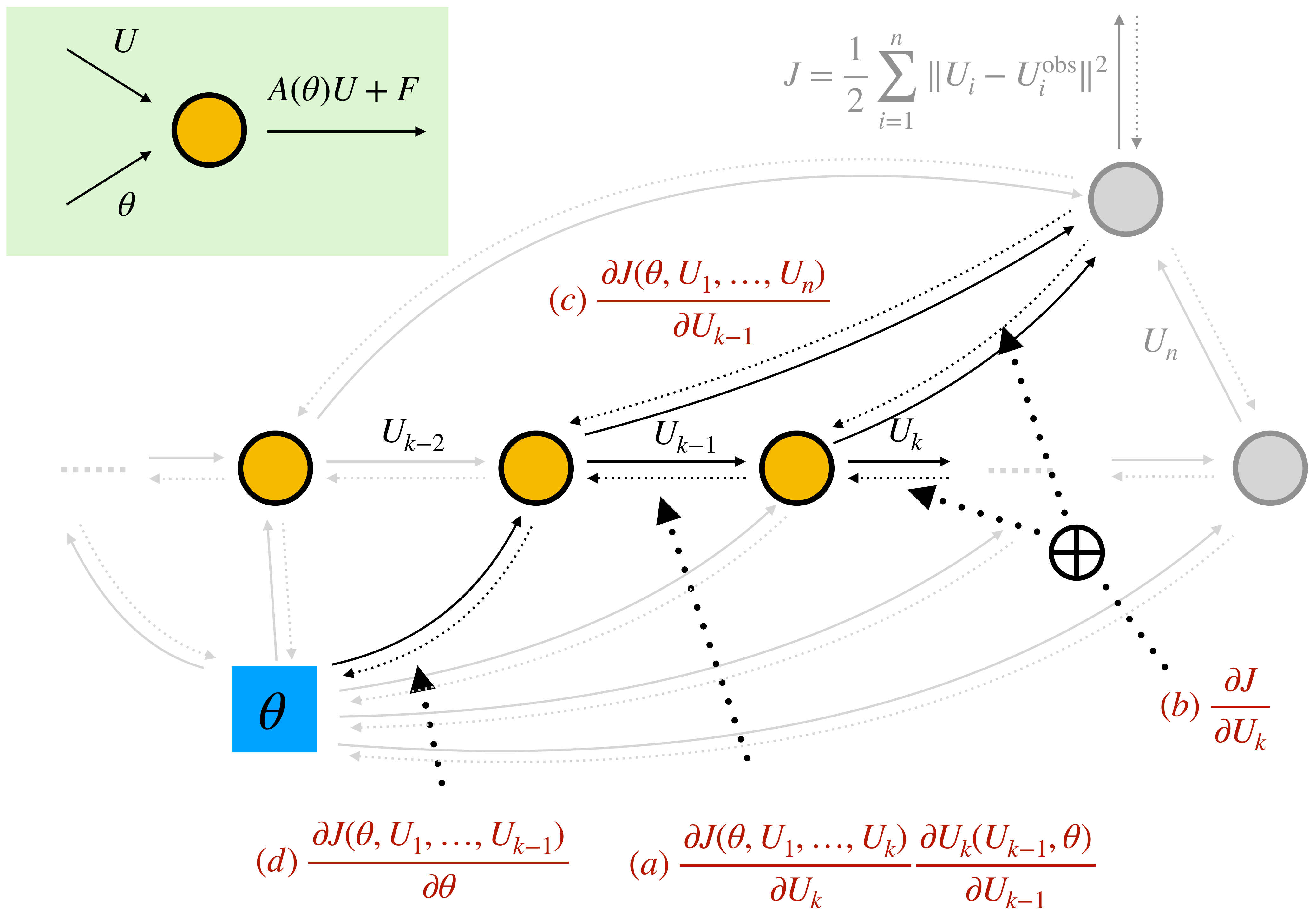}
	\caption{Computational graph and gradient back-propagation by automatic differentiation}
	\label{fig:auto-diff}
\end{figure*}

\subsection*{Implementation}

In this section we describe how automatic differentiation assists computing the gradient of the misfit function with respect to the physical parameters in ADSeismic. We use a staggered grid finite difference method for discretizing both the acoustic wave equation and the elastic wave equation with perfectly matched layer (PML)~\citep{roden2000convolution,komatitsch2007unsplit, grote2010efficient}. The governing equation for the acoustic wave equation is  
\begin{equation}
    \frac{\partial^2 u}{\partial t^2} = \nabla\cdot(c^2 \nabla u) +  f
    \label{eqn:acoustic}
\end{equation}
where $u$ is displacement, $f$ is the source term, and $c$ is the spatially varying acoustic velocity. The inversion parameters of interest are $c$ or $f$.
The governing equation for the elastic wave equation is 
\begin{equation}
\begin{aligned}
    \rho \frac{\partial v_i}{\partial t} &= \sigma_{ij, j} + \rho f_i \\ 
    \frac{\partial \sigma_{ij}}{\partial t} &= \lambda v_{k,k} + \mu(v_{i,j} + v_{j,i})
\end{aligned}
\label{eqn:elastic}
\end{equation}
where $v$ is velocity, $\sigma$ is stress tensor, $\rho$ is density, and $\lambda$ and $\mu$ are the Lam\'e's constants. The inversion parameters in the elastic wave equation case are $\lambda$, $\mu$, $\rho$ or $f$.

The finite difference discretization leads to a system of linear equations Eq.~\eqref{equ:forward} for both Eq.~\eqref{eqn:acoustic} and Eq.~\eqref{eqn:elastic}. For the adjoint-state method, we also need to derive and implement Eq.~\eqref{equ:Adj1} to compute the gradient Eq.~\eqref{eqn:adjoint-gradient}. This step is unnecessary in ADSeismic since the gradient is extracted automatically from the computational graph. We emphasize that only the forward simulation code is required for building a computational graph and the gradient automatically computed by AD is the same as that computed by the adjoint-state method. 

We use the Julia package, ADCME\footnote{\url{https://github.com/kailaix/ADCME.jl}}, for our implementation since it provides an interface to TensorFlow for automatic differentiation and intuitive Julia syntax for expressing mathematical formulae in numerical simulation. Additionally, ADCME provides built-in optimization solvers such as L-BFGS-B~\citep{zhu1997algorithm} for minimizing the misfit function. ADCME allows us to easily extend ADSeismic to other equations or models in seismic applications.

\section*{Applications}

In this section, we first highlight the performance of ADSeismic on CPUs and GPUs, where we observe an impressive 20- fold  and 60-fold acceleration for acoustic and elastic wave equations, respectively.   We then present three applications of ADSeismic to seismic problems including: velocity model estimation, earthquake location and source time function estimation, and  earthquake rupture imaging. The applications are built with the same forward simulation code (acoustic or elastic wave equations) with only minor changes to specify the inversion parameters to be recovered. 

\subsection*{Performance Benchmarking}

We benchmark the performance of ADSeismic\footnote{The CPU model on the test platform is the Intel(R) Xeon(R) CPU E5-2698 v4. The GPU model is the Tesla V100-SXM2.}. Since the backend of ADSeismic is TensorFlow, the same forward simulation code runs on both the CPU and GPU. The speed comparisons between the CPU and GPU for the acoustic equation and elastic equation are shown in Fig.~\ref{fig:speed_acoustic} and \ref{fig:speed_elastic} with the computation times averaged over three tests. We achieve more than 20 times the acceleration for the acoustic equation and 60 times the acceleration for the elastic equation on the GPU. The extra acceleration for the elastic equation is due to the fact that Tensorflow automatically parallels the updating of the velocity and stress tensors (Eq. \ref{eqn:elastic}). 

In ADSeismic, we can split the sources onto different GPUs so that the forward simulation and the associated gradient are computed using AD in parallel across the GPUs. Next, the gradients are assembled on the CPU and fed to the L-BFGS optimizer to update the inversion parameters (Fig.~\ref{fig:multi-gpu}). The updated inversion parameters are then distributed to all GPU devices for the next integration.
This multi-GPU routine avoids storing all wavefields on a single GPU, thus enabling us to perform larger numerical simulations than would otherwise be possible. 
\begin{figure*}[!ht]
	\centering
	\begin{subfigure}{0.45\textwidth}
		\includegraphics[width=\textwidth]{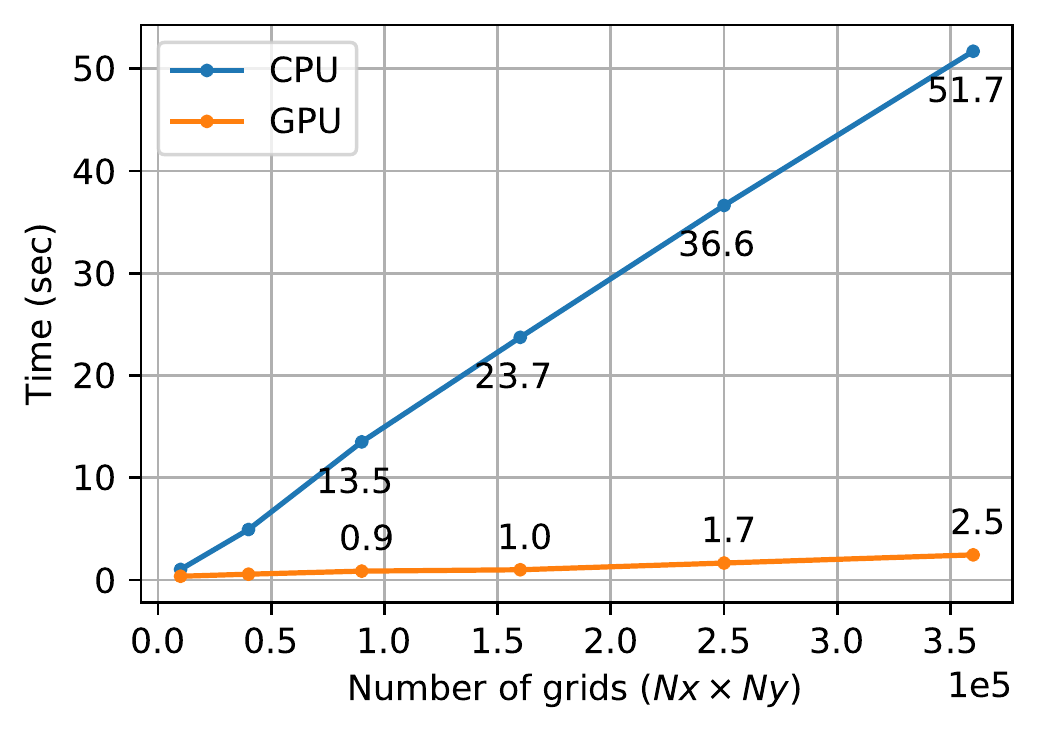}
		\caption{}
		\label{fig:speed_acoustic}
	\end{subfigure}
	\begin{subfigure}{0.45\textwidth}
		\includegraphics[width=\textwidth]{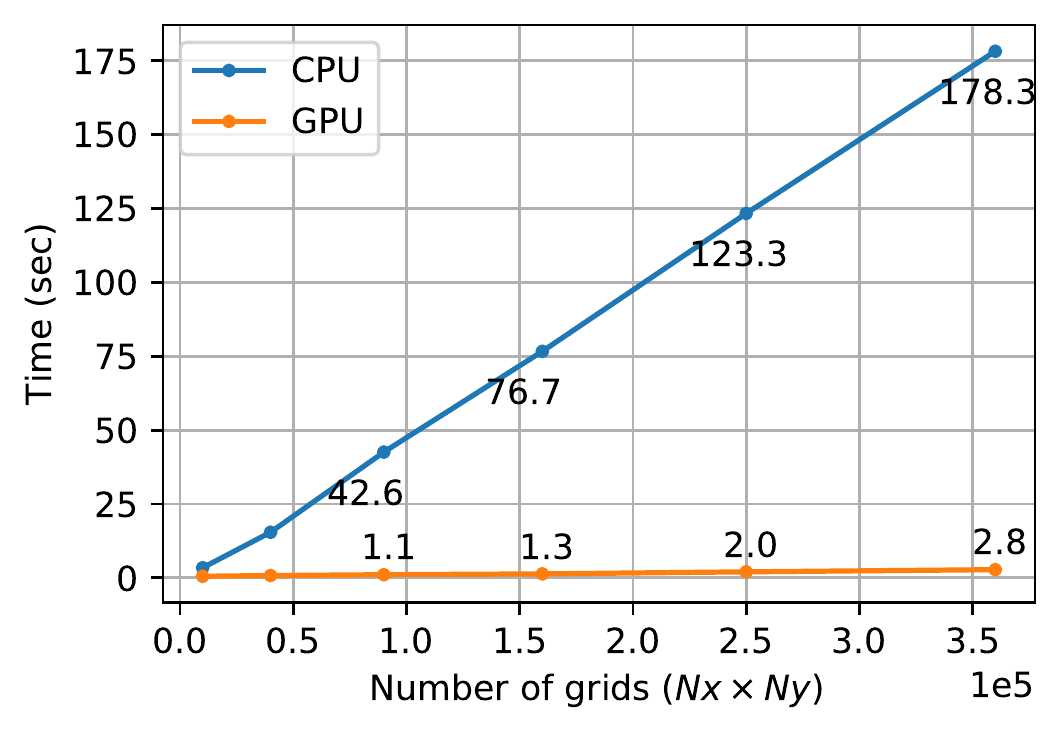}
		\caption{}
		\label{fig:speed_elastic}
	\end{subfigure}
	\begin{subfigure}{0.6\textwidth}
		\includegraphics[width=\textwidth]{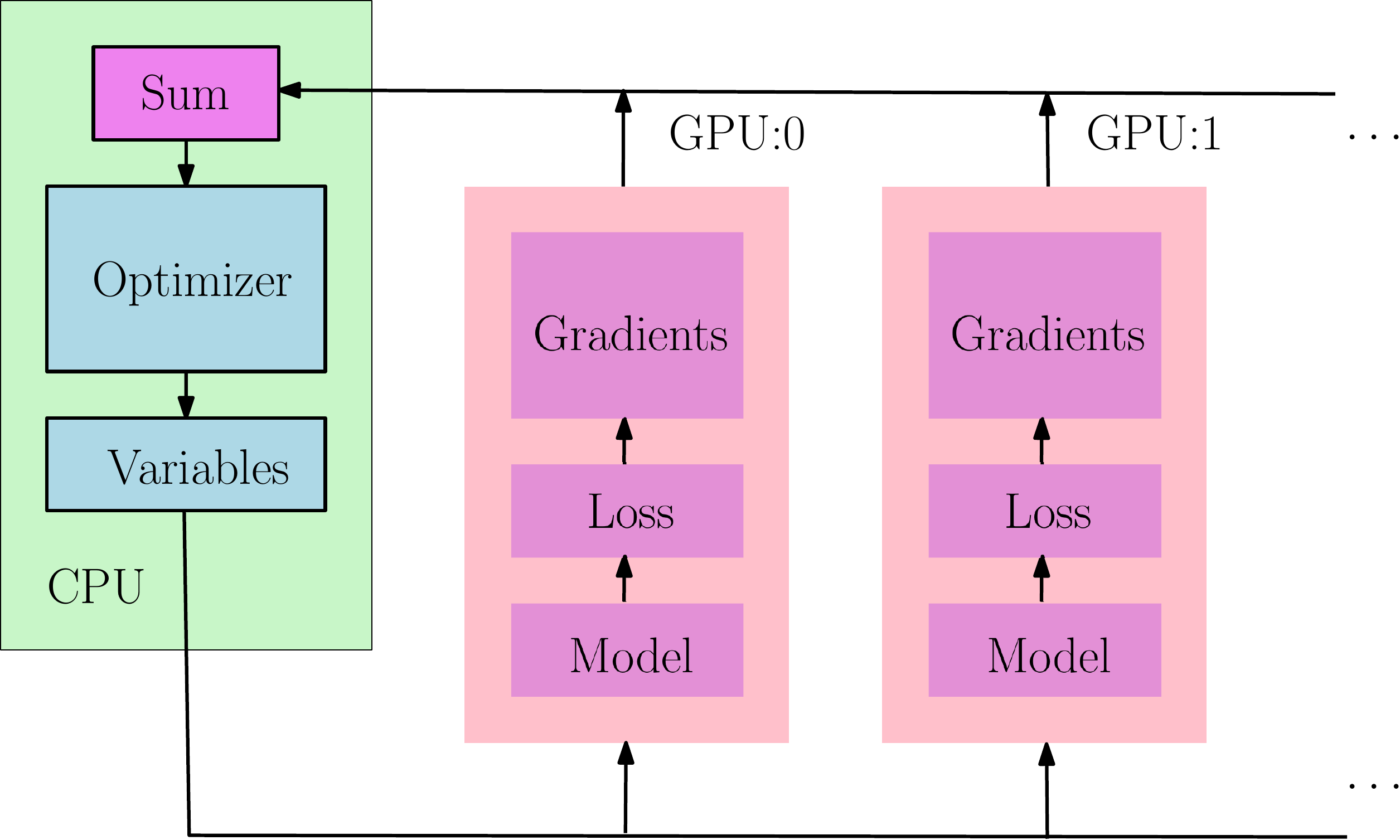}
		\caption{}
		\label{fig:multi-gpu}
	\end{subfigure}	
	\caption{High-performance computing of ADSeismic: (a) computational times on CPU and GPU for the acoustic equation; (b) computational times on CPU and GPU for the elastic equation; (c) diagram of multi-GPU computing.}
	\label{fig:speed}
\end{figure*}

\subsection*{Full-waveform Inversion}

Classic full-waveform inversion (FWI) is based on the adjoint-state method \citep{tarantola1984inversion, virieux2009overview, plessix2006review, fichtner2006adjoint}.
As shown above, AD is mathematically equivalent to the adjoint-state method so that we can apply AD directly to the full-waveform inversion without manual derivation of the adjoint-state equations.
We demonstrate our method using two cases: the well-known and geometrically complex Marmousi benchmark model \citep{versteeg1994marmousi, martin2002marmousi} (Fig.~\ref{fig:marmousi}) and a layered Earth crust model with embedded anomalies of elliptical shape (Fig.~\ref{fig:layermodel}).
We place eight active sources on the surface with a spacing of 850m for the Marmousi benchmark and four plane waves with incident angles from $-30^o$, $-10^o$, $10^o$, to$30^o$ from the bottom to mimic incoming teleseismic waves for the layered model. We use a Ricker wavelet as the source time function for both cases. 
Similar to common FWI applications, we choose the L-BFGS optimization method and a $L2-$norm loss function for all the inversion. 
We note that ADSeismic supports other optimization techniques such as the stochastic gradient descent (SGD) method~\citep{witte2018full, bottou2010large, richardson2018seismic} although the  application  and  comparison  of  these  optimizers is beyond the scope of this paper.
The inversion results in Fig.~\ref{fig:marmousi_inv} and Fig.~\ref{fig:layermodel_inv} show good recovery of the complex velocity structures and anomalies demonstrating that AD accurately estimates the velocity models. 

\begin{figure*}[htpb]
	\centering
	\begin{subfigure}{0.6\textwidth}
		\includegraphics[width=\textwidth]{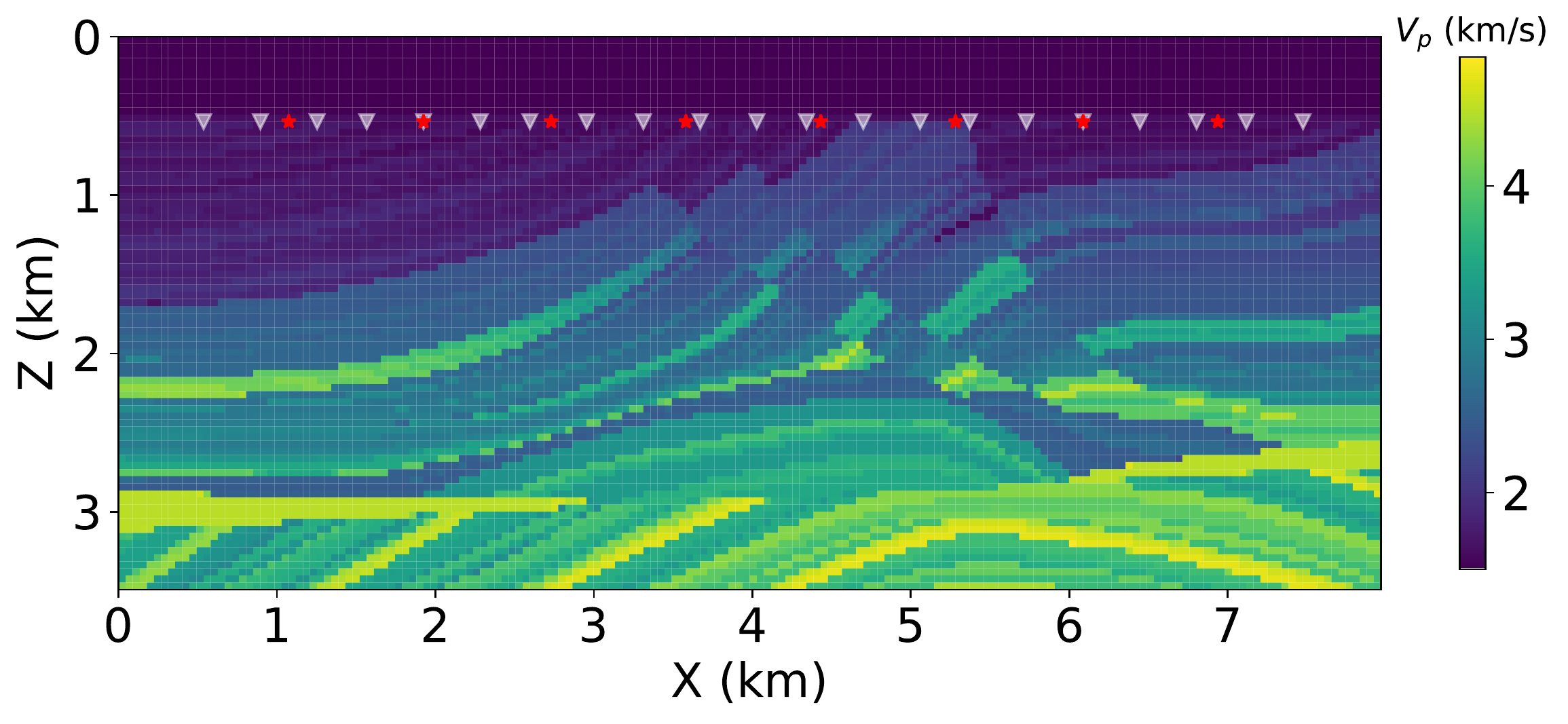}
		\caption{}
	\end{subfigure}
	\begin{subfigure}{0.6\textwidth}
		\includegraphics[width=\textwidth]{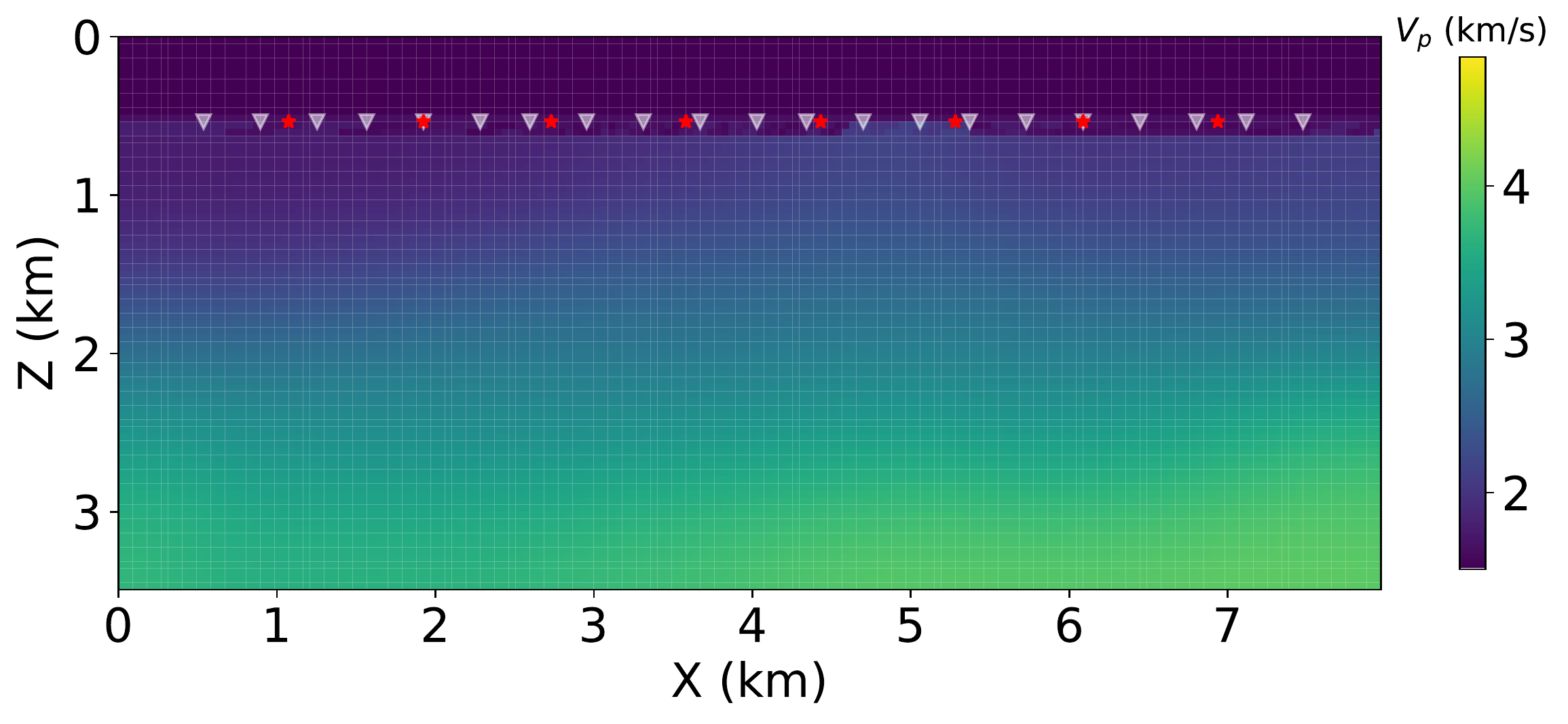}
		\caption{}
	\end{subfigure}	
	\begin{subfigure}{0.6\textwidth}
		\includegraphics[width=\textwidth]{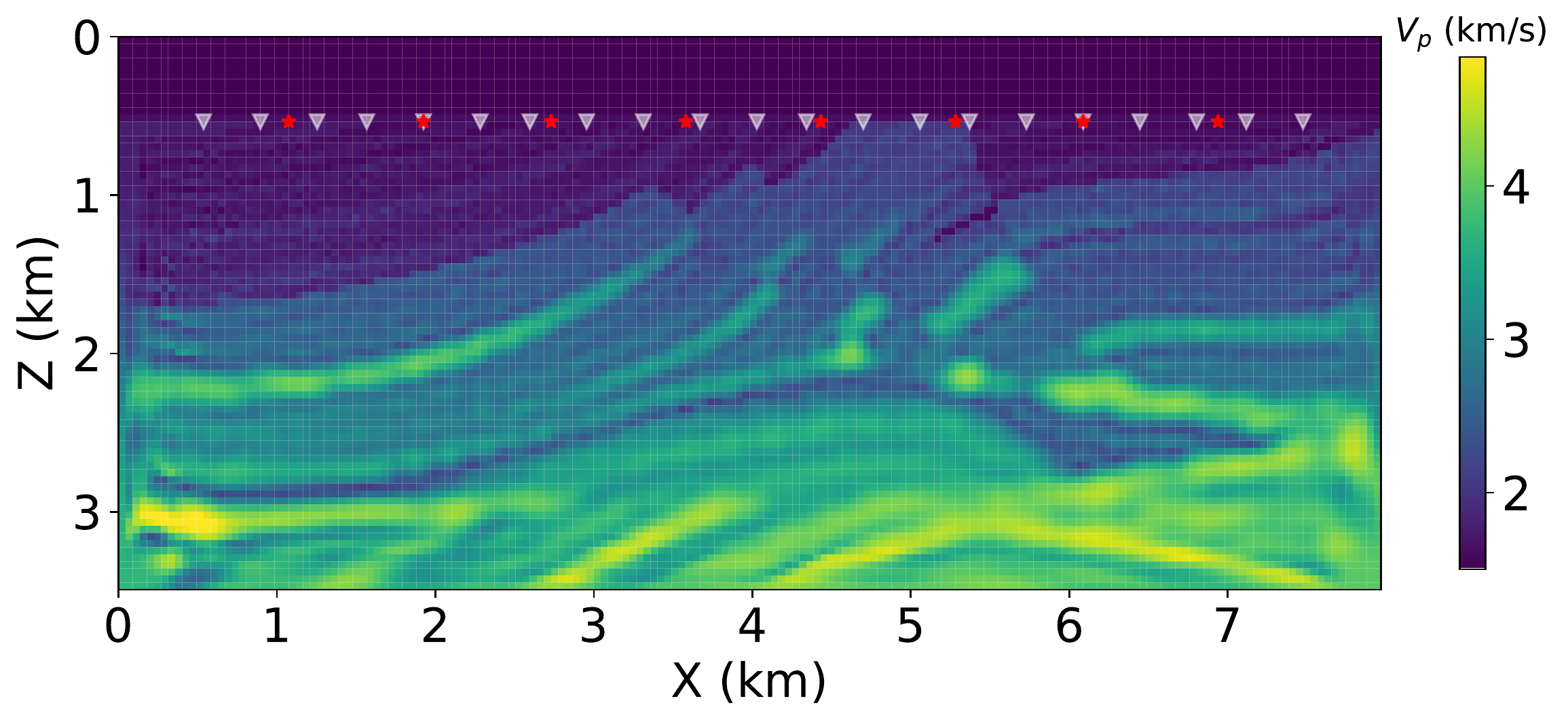}
		\caption{}
		\label{fig:marmousi_inv}
	\end{subfigure}	
	\caption{The Marmousi benchmark model: (a) the true P-wave velocity model; (b) the initial velocity model; (c) the inverted velocity model. The white triangles at the top represent the receiver locations, while the red stars represent the source locations.}
	\label{fig:marmousi}
\end{figure*}

\begin{figure*}[htpb]
	\centering
	\begin{subfigure}{0.6\textwidth}
		\includegraphics[width=\textwidth]{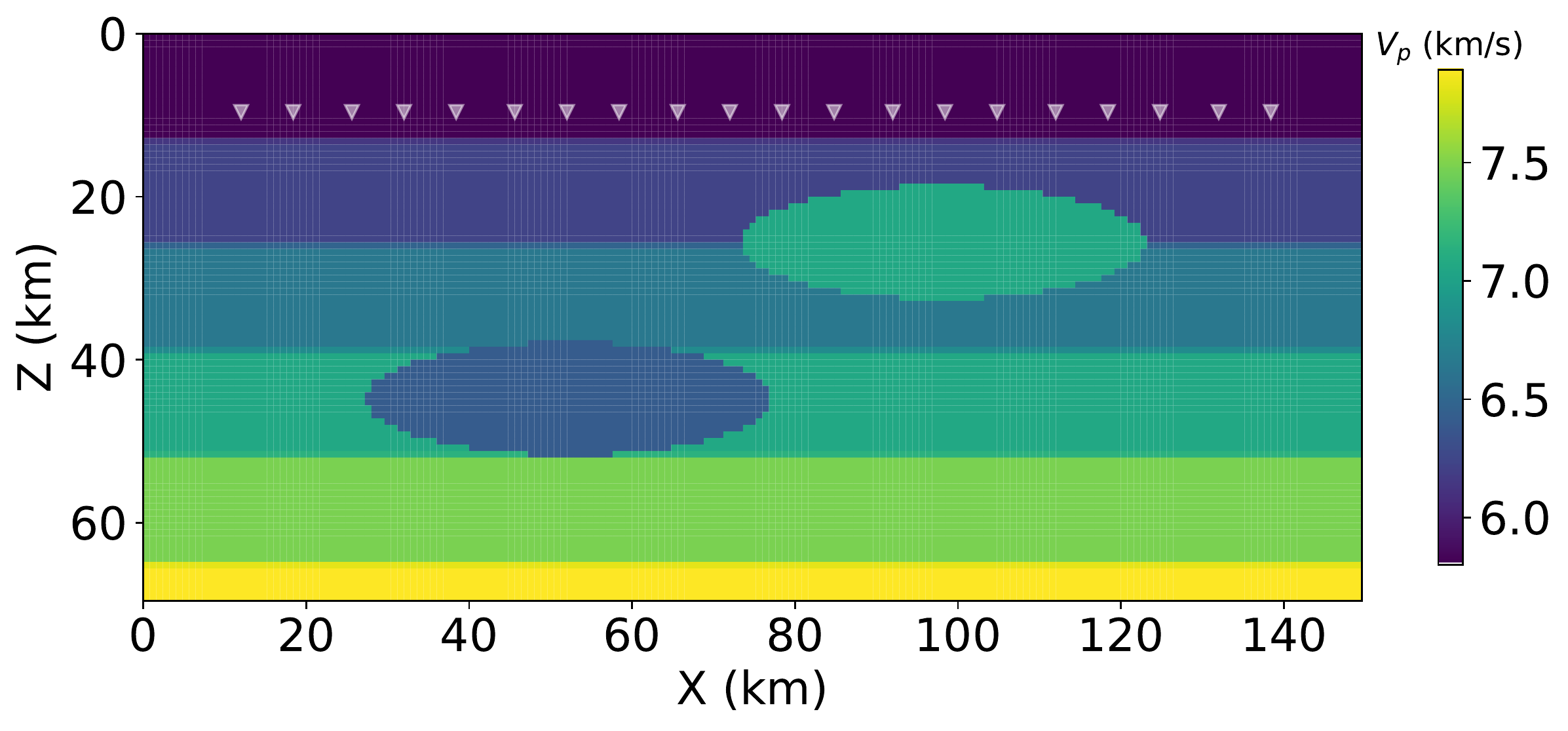}
		\caption{}
	\end{subfigure}
	\begin{subfigure}{0.6\textwidth}
		\includegraphics[width=\textwidth]{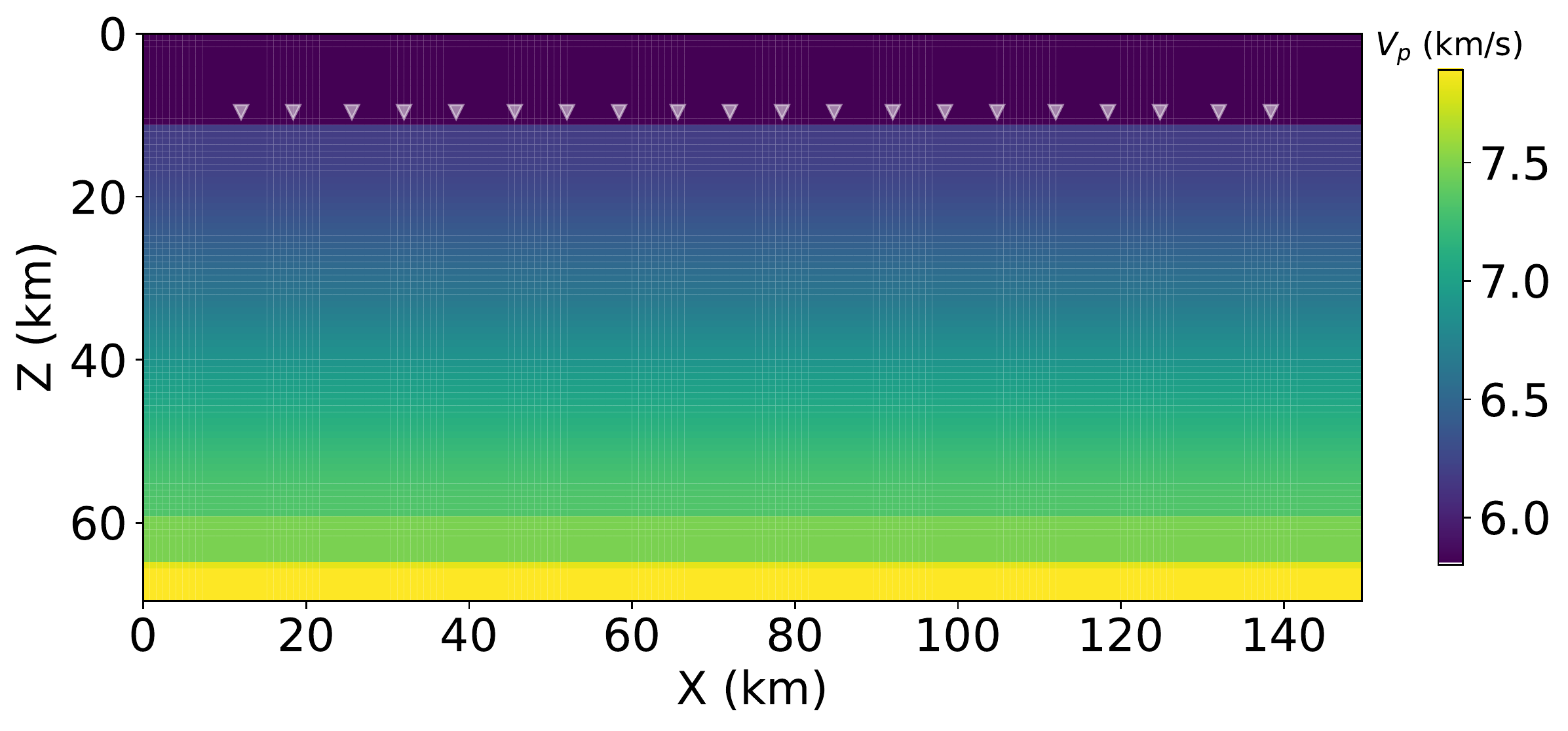}
		\caption{}
	\end{subfigure}	
	\begin{subfigure}{0.6\textwidth}
		\includegraphics[width=\textwidth]{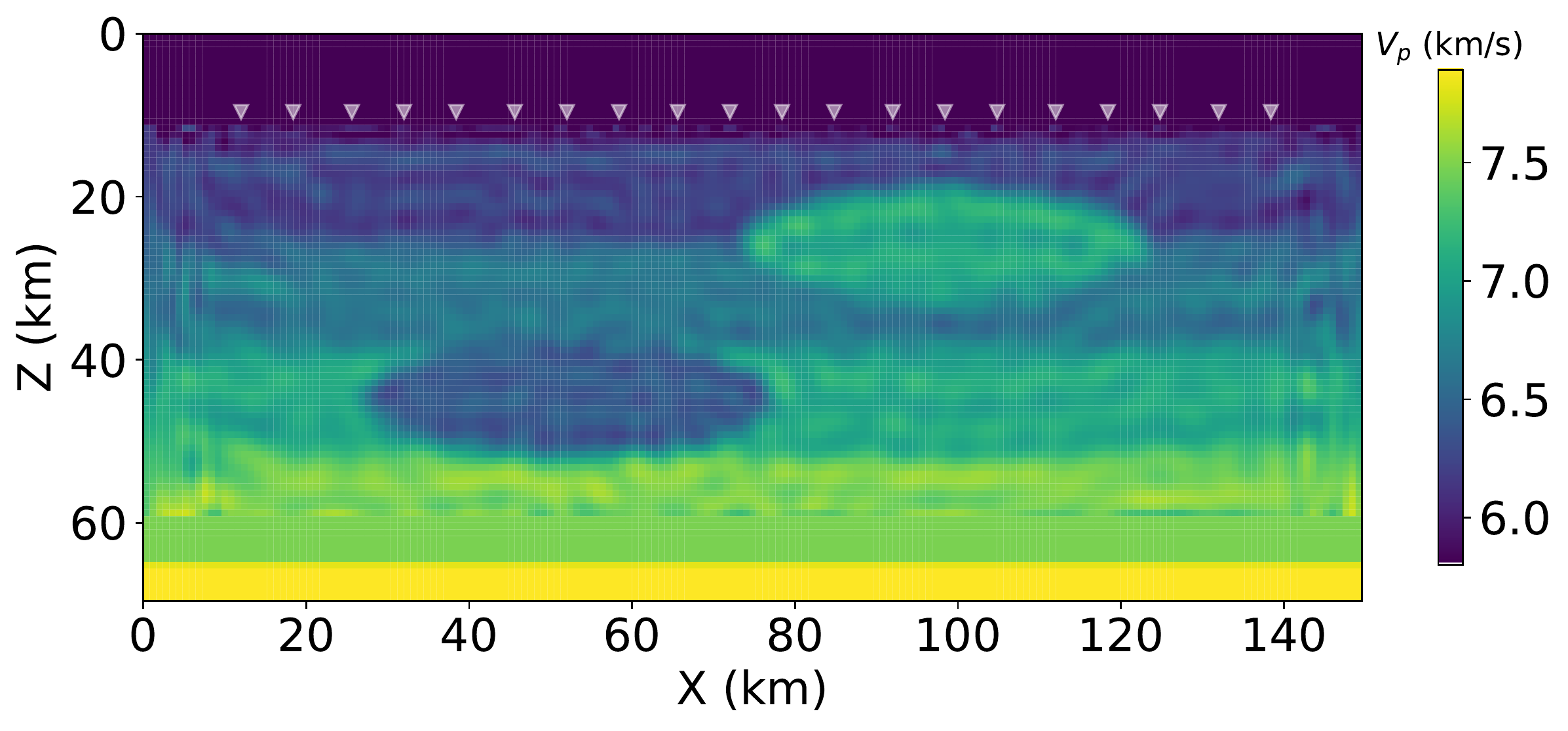}
		\caption{}
		\label{fig:layermodel_inv}
	\end{subfigure}	
	\caption{The layered model with inclusions: (a) the true P-wave velocity model; (b) the initial velocity model; (c) the inverted velocity model. Here we use four plane waves propagating from the bottom to the surface with incident angles of -30$^o$, -10$^o$, 10$^o$, and 30$^o$.}
	\label{fig:layermodel}
\end{figure*}

\subsection*{Earthquake Location and Source Time Function Retrieval}

Determining earthquake location is a routine, but essential earthquake monitoring task for which commonly used methods include 1) linearized inversion for absolute earthquake location \citep{lienert1986hypocenter, kissling1994initial, kissling1995program, klein2002user} and relative earthquake location \citep{waldhauser2000double, schaff2004optimizing}; 2) non-linear inversion methods \citep{thurber1985nonlinear, lomax2000probabilistic, lomax2009earthquake}; and 3) migration-based or time-reversal methods \citep{rubinstein2007full, Nakata2016, nakata2016migration}. The migration-based method produces a focused wavefield that is the same as the gradient in the first iteration of the adjoint-state method \citep{Fichtner2010}; however, this method does not explicitly give the source location but requires post-processing to extract potential earthquake locations from the focused wavefield. 

We use a new non-linear earthquake location method based on full waveforms.
The inversion target, the source term $f(x, t)$ in equation (\ref{eqn:acoustic}), is a delta function in space, whose gradient at zero is not well defined, making the direct application of the adjoint-state method difficult. 
With AD, we can flexibly re-parameterize the inversion target $f(x, t)$ with a continuous Gaussian form
\begin{equation}
    f(x, t) =  \frac{g(t)}{2\pi \sigma^2} \exp \left( -\frac{||x - x_0||^2}{2 \sigma^2} \right)
\end{equation}
where $g(t)$ is the source time function, $x_0$ is the earthquake location, and $\sigma$ is the standard deviation of the Gaussian function, which in our test is set to half of the grid size.
In this test, we simultaneously estimate the earthquake location $x_0$ and the source time function $g(t)$ by fitting the recorded waveforms. Fig.~\ref{fig:source} shows the evolution of the earthquake location and source time function during optimization from an initial state of a random selected earthquake location and a zero source time function.  The inversion results agree well with the true earthquake location and source time function. 

\begin{figure*}[htpb]
	\centering
	\begin{subfigure}{0.6\textwidth}
		\includegraphics[width=\textwidth]{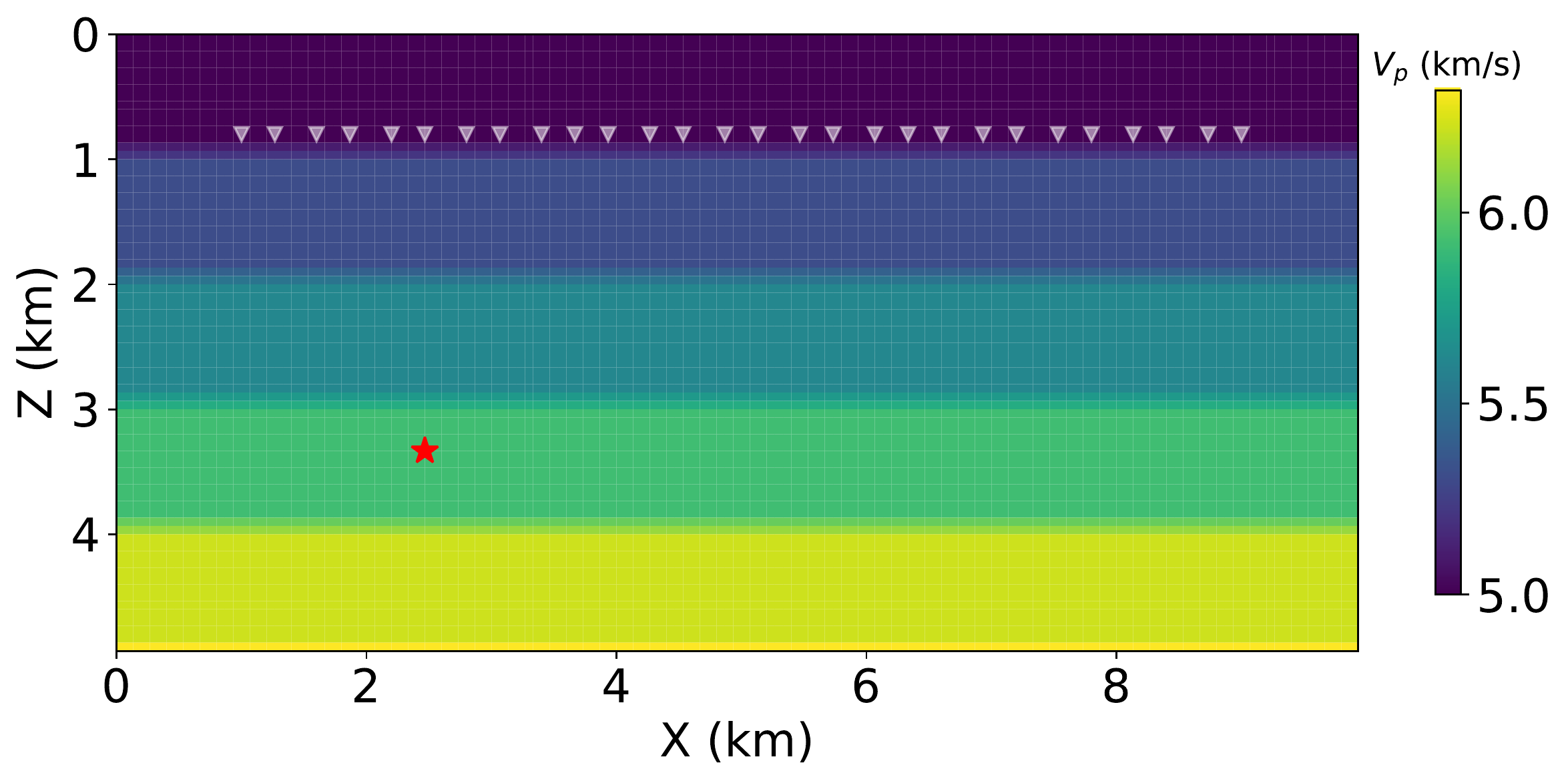}
		\caption{}
	\end{subfigure}
	\begin{subfigure}{0.6\textwidth}
		\includegraphics[width=\textwidth]{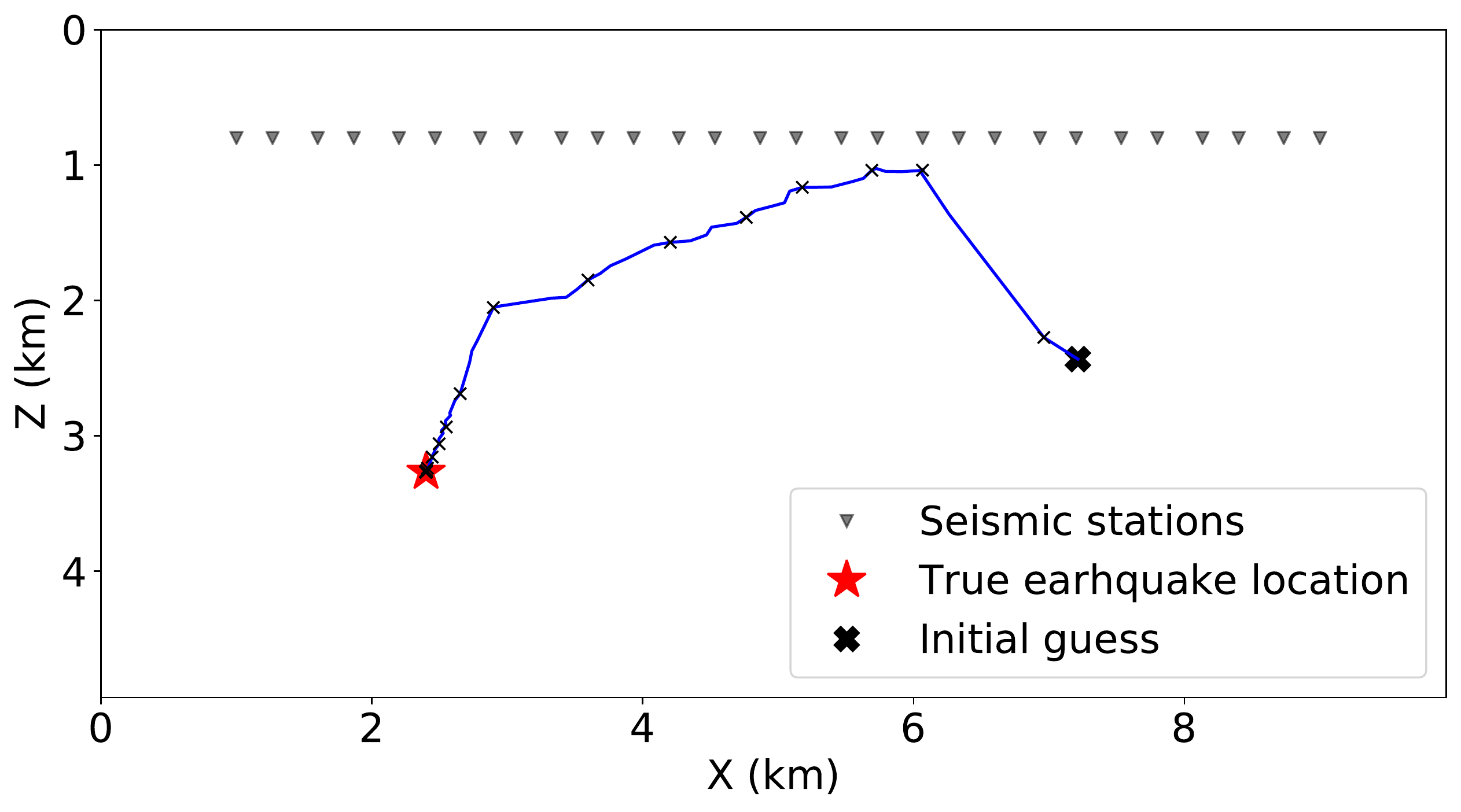}
		\caption{}
		\label{fig:source_loc}
	\end{subfigure}
	\begin{subfigure}{0.6\textwidth}
		\includegraphics[width=\textwidth]{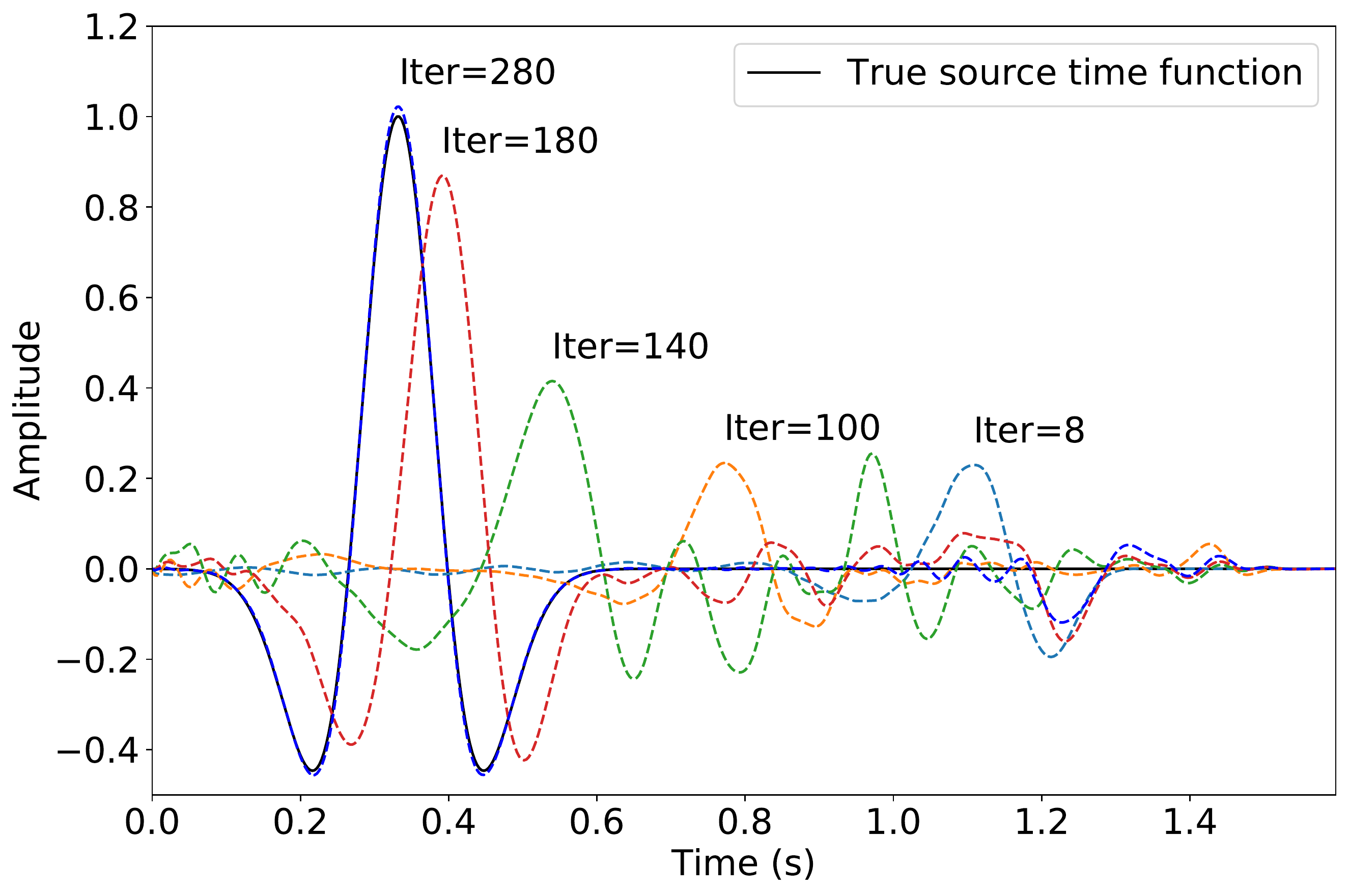}
		\caption{}
		\label{fig:source_func}
	\end{subfigure}	
	\caption{Inversion of earthquake location and source time function: (a) the velocity model and true source location; (b) the evolution of earthquake location represented by the black x; (c) the evolution of the source time function from a zero initial state.}
	\label{fig:source}
\end{figure*}

\subsection*{Earthquake Rupture Imaging}

The rupture process of large earthquakes has resolvable spatial and temporal extent. Imaging this rupture process from observed seismic data contributes to the understanding the complexity behind the evolution of earthquakes. The linearized kinematic inversion method using elastodynamic Green's functions ~\citep{kikuchi1982inversion, hartzell1983inversion, Beroza1988, beroza1991near, suzuki2011rupture, wald1990rupture, zhang2009spatio} and direct imaging methods, such as back-projection~\citep{ishii2005extent, lay2010teleseismic, xu2009rupture, kruger2005tracking, walker2005rupture, simons20112011, meng2012earthquake}, are the two most commonly used for imaging the earthquake rupture process. The adjoint-state method has also been tested for rupture process inversion \citep{kremers2011exploring, somala2018finite}. 

We consider a simplified 2D earthquake rupture case to show the potential applications of ADSeismic for imaging the earthquake rupture process. We mimic a simple rupture process with a group of sources activated from the left to right with different rise times and amplitudes (Fig.~\ref{fig:rupture_vp} and \ref{fig:rupture_true}). We consider two inversion targets: the entire rupture history, and the rupture time and amplitude. To estimate the rupture history, we choose the unknown parameter as the source time function ($f(t)$). To estimate the rupture time and amplitude, we choose the parameters of rupture time $t_0$ and amplitude $A_0$ by assuming that the shape of the source time function is known as a Gaussian function:
\begin{equation}
    f(t) = A_0 \exp \left( -\frac{(t-t_0)^2}{2 \sigma^2} \right)
\end{equation}
Imaging the entire rupture history contains many more parameters (the number of time steps $Nt$ for each candidate location) than when estimating only the rupture time and amplitude (two parameters ($A_0$ and $t_0$) for each candidate location), with the result that the former problem is less constrained for the same number of receivers.
To estimate the entire rupture history, the initial state is set to be zero slip for all locations (Fig.~\ref{fig:rupture_time_init}).
When estimating the rupture time and amplitude, the initial state is set to be a constant rupture time and amplitude.
The final inversion results are shown in Fig.~\ref{fig:rupture_inv} and \ref{fig:rupture_time_inv}. 
Note that we have not incorporated a dynamic rupture model to simulate the rupture propagation in this test; rather, AD provides an inversion method to back-propagate the gradients from the wave equation into the dynamic rupture equation to optimize the fault parameters based on seismic waves.

\begin{figure*}[htpb]
	\centering
	\begin{subfigure}{0.6\textwidth}
		\includegraphics[width=\textwidth]{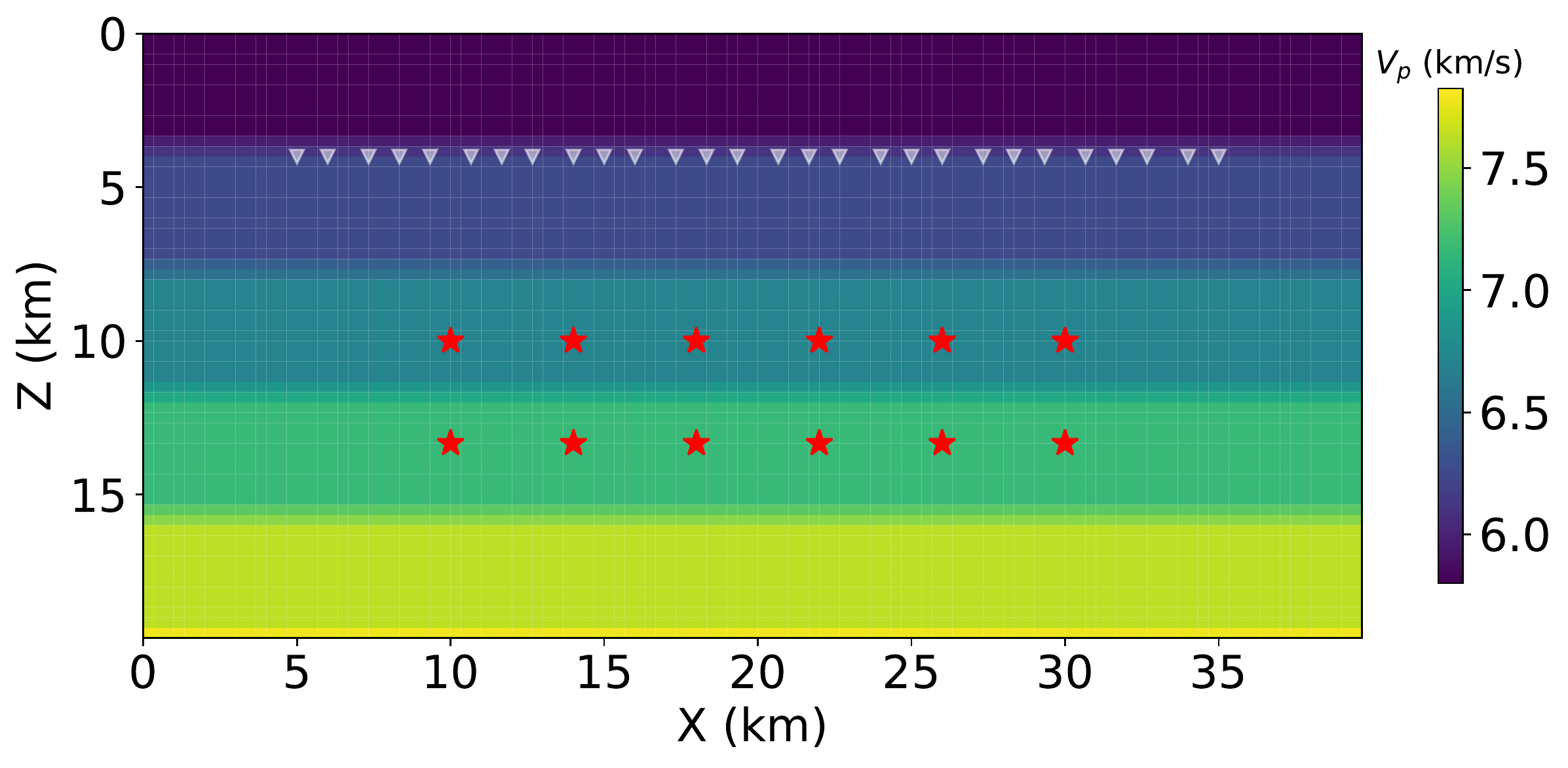}
		\caption{}
		\label{fig:rupture_vp}
	\end{subfigure}
	\begin{subfigure}{0.4\textwidth}
		\includegraphics[width=\textwidth]{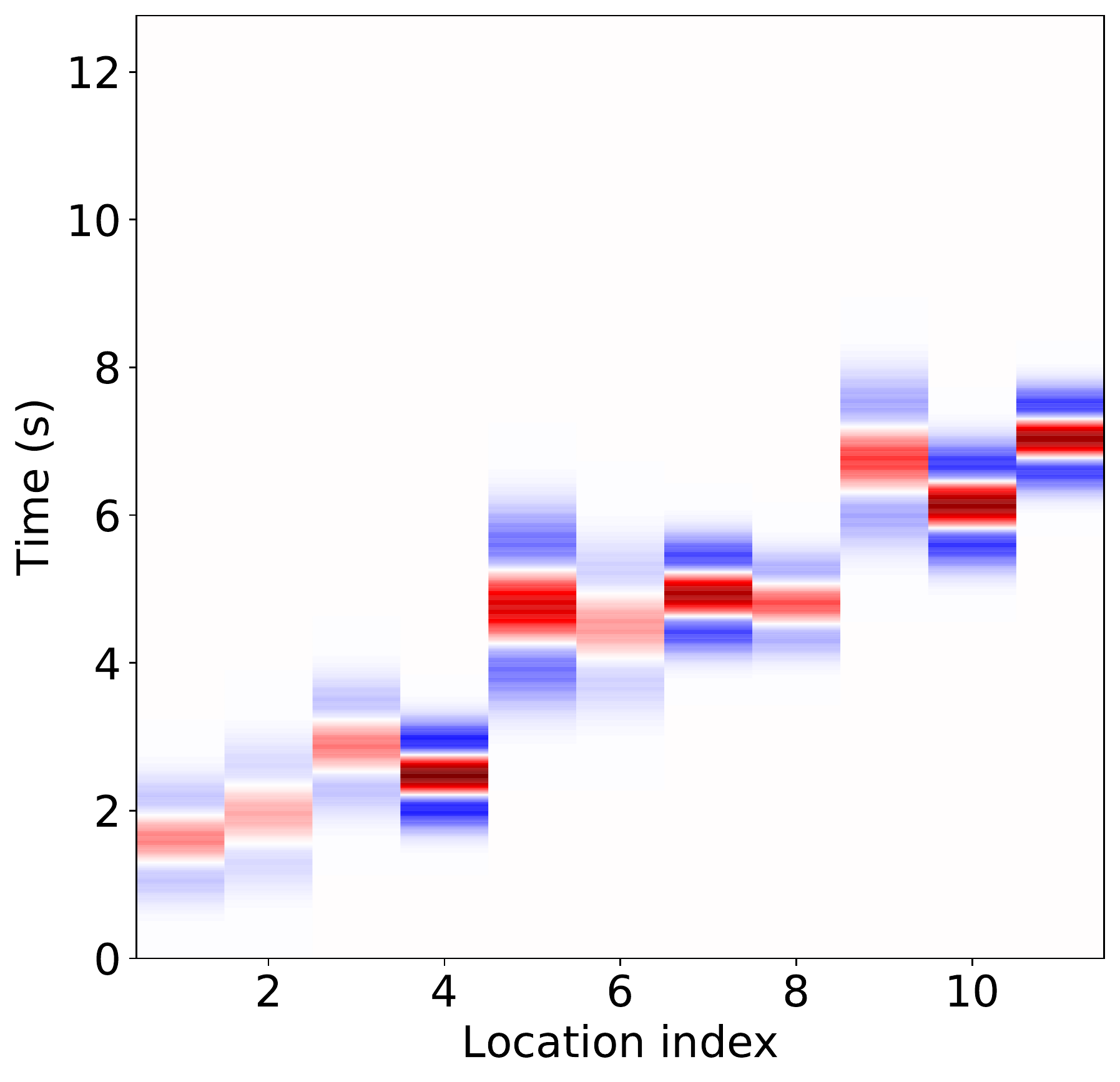}
		\caption{}
		\label{fig:rupture_true}
	\end{subfigure}
	\begin{subfigure}{0.4\textwidth}
		\includegraphics[width=\textwidth]{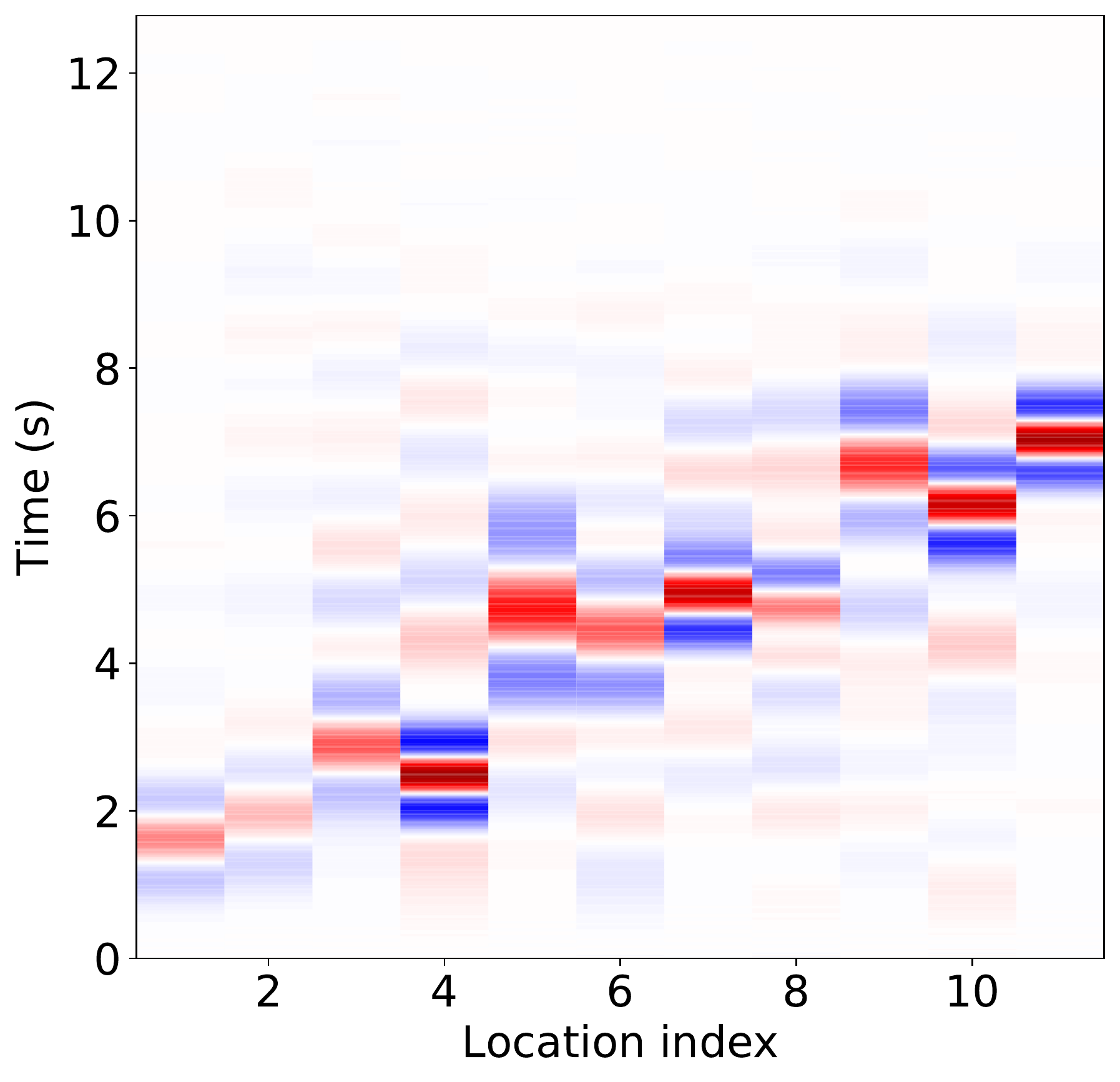}
		\caption{}
		\label{fig:rupture_inv}
	\end{subfigure}	
	\caption{Inversion of the whole rupture history: (a) the velocity model, receivers (white triangles), and simplified rupture locations (red starts); (b) true slip waveforms; (c) inverted slip waveforms. }
    \label{fig:rupture}
\end{figure*}

\begin{figure*}[htpb]
	\centering
	\begin{subfigure}{0.33\textwidth}
		\includegraphics[width=\textwidth]{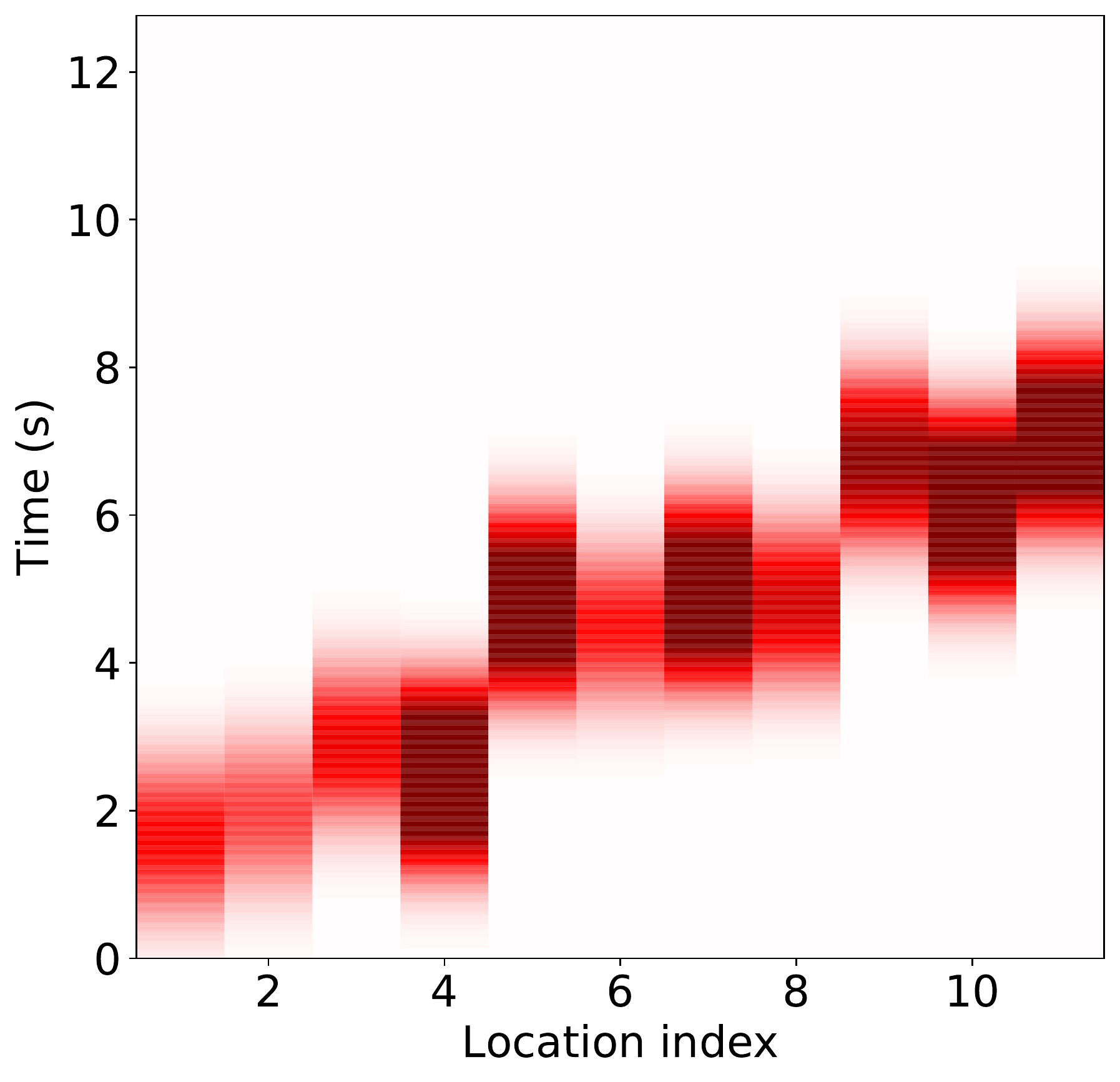}
		\caption{}
	\end{subfigure}~
	\begin{subfigure}{0.33\textwidth}
		\includegraphics[width=\textwidth]{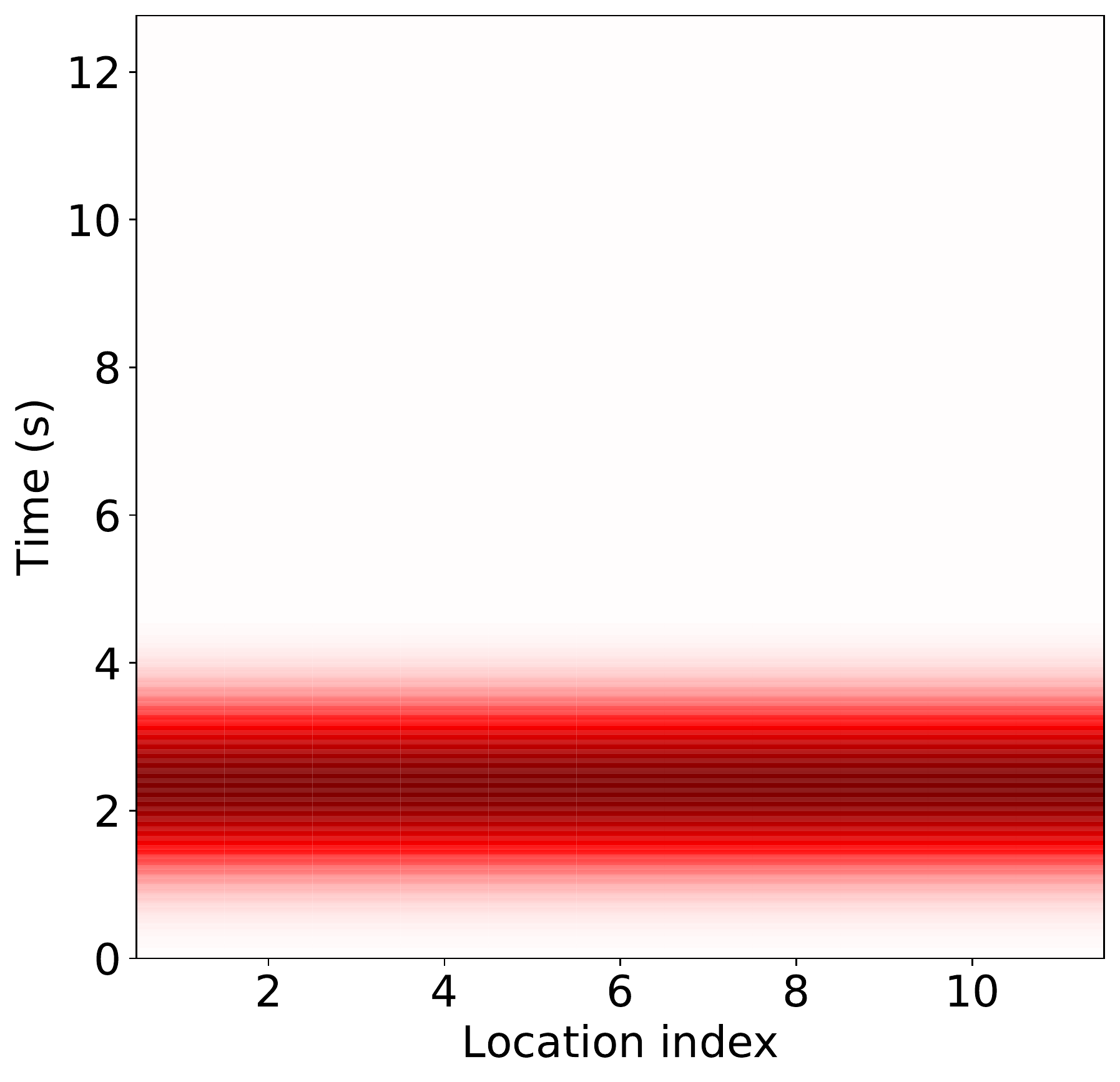}
		\caption{}
		\label{fig:rupture_time_init}
	\end{subfigure}~
	\begin{subfigure}{0.33\textwidth}
		\includegraphics[width=\textwidth]{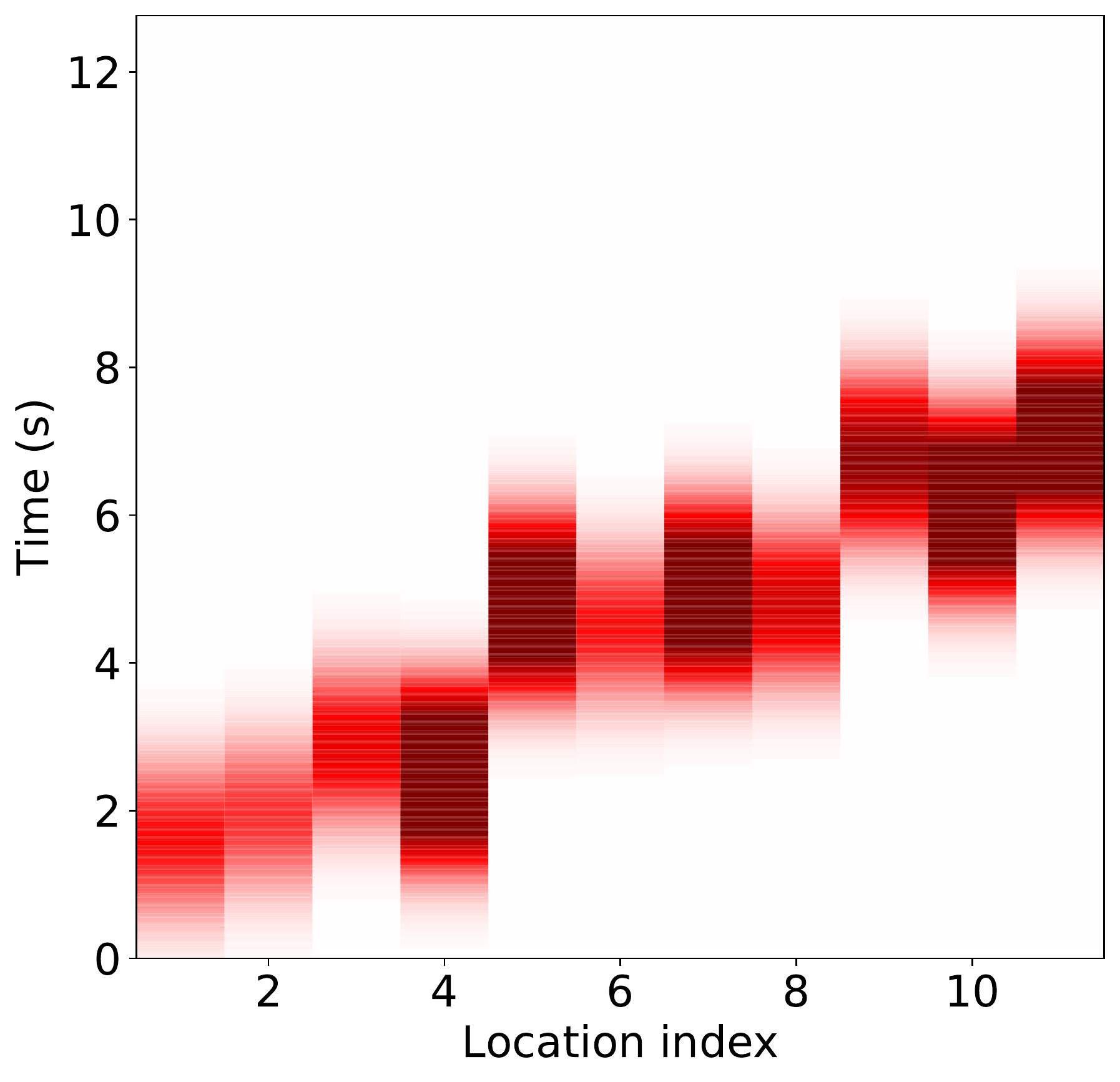}
		\caption{}
		\label{fig:rupture_time_inv}
	\end{subfigure}	
	\caption{Inversion of slip time and amplitude: (a) true earthquake slip with a shape of Gaussian function; (b) initial inversion state with a same slip time and amplitude; (b) inverted earthquake slip.}
	\label{fig:rupture_time}
\end{figure*}

\section*{Limitations}

Despite the many strengths of ADSeismic, it has three major limitations:

First, as with any estimation problem, it may suffer from ill-conditioning. The same applies for the often-encountered problem in seismology of cycle-skipping~\citep{virieux2009overview, hu2018retrieving}, which produces a local minimum when the predicted signal is shifted more than half a wavelength from the observation due to a poor initial model or lack of low frequency information. Neither AD nor adjoint-state methods can solve the ill-conditioning issue, which is intrinsic to the optimization problem. Nevertheless, many of the techniques for improving the conditioning of the optimization problem \citep{biondi2014simultaneous, ma2013wave, wu2014seismic, yang2018application} can be applied to our AD framework. 

Second, reverse-mode AD has demanding memory requirements, which is a noteworthy constraint when running large simulations on GPUs. Techniques such as check-pointing schemes~\citep{chen2016training} have been used to to reduce memory requirements. In ADSeismic, we partially alleviate this problem by using multi-GPUs, where the source functions are split onto multiple GPUs and simulations are executed concurrently. 

Third, the numerical schemes we consider in this work are all explicit. In some applications \citep{richardson2018seismic, liu2009practical, chu2012implicit}, implicit schemes are desirable for reasons such as stability, accuracy, and nonlinearity. For implicit schemes it is challenging to apply reverse-mode AD techniques since most AD frameworks only provide explicit differentiable operators. \citet{li2019time} introduce the intelligent automatic differentiation method that implements AD for implicit numerical schemes.  This approach could be used for augmenting ADSeismic for implicit schemes.

\section*{Conclusion}

We have demonstrated the connection between the automatic differentiation technique in deep learning and adjoint-state methods in seismic numerical simulations. Based on that correspondence we design a general seismic inversion framework, ADSeismic, based on the AD functionality from deep learning software. ADSeismic shows promising results on a series of seismic inversion problems and demonstrates dramatic acceleration on GPUs compared with CPUs. 
Since deep learning techniques and frameworks are continuously improving, ADSeismic allows for flexibly experimenting with new models, leverages specialized hardware designed for deep learning, and executes numerical simulations on heterogeneous computing platforms. This should facilitate general seismic inversion in a high performance computing environment. Furthermore, it opens a pathway for innovation in inverse modeling in geophysics by leveraging AD functionalities in a deep learning framework.


\clearpage
\bibliographystyle{apacite}
\bibliography{reference}

\end{document}